\documentclass{article}

\usepackage{latexsym,amsthm,amsmath,amssymb,url}

\newcommand{\pp}{\hat{\phi}}

\newcommand{\2}{\vspace{3mm}}

\newcommand{\Tt}[1]{\mbox{$\Delta(#1)$}}
\newcommand{\MST}{\textsc{Max-SAT}}

\newcommand{\maxw}{{\rm sat}}

\let\phi=\varphi



\newtheorem{lemma}{Lemma}

\newtheorem{theorem}{Theorem}

\begin{document}

 \title{A New Lower Bound on the Maximum Number of Satisfied Clauses in Max-SAT and its Algorithmic Applications
 \thanks{A preliminary version of this paper appeared in the Proceedings of IPEC 2010. Research was supported in part by an International Joint grant of Royal Society. Research of Gutin was also supported
 in part by the IST Programme of the European Community, under the
 PASCAL 2 Network of Excellence.}}

 \author{Robert Crowston, Gregory Gutin, Mark Jones, Anders Yeo\\
\small Department of Computer Science\\[-3pt]
\small  Royal Holloway, University of London\\[-3pt]
\small Egham, Surrey TW20 0EX, UK\\[-3pt]
\small \texttt{robert,gutin,markj,anders@cs.rhul.ac.uk}}

 \date{ }
 \maketitle

\begin{abstract}
A pair of unit clauses is called conflicting if it is of the form $(x)$, $(\bar{x})$. A CNF formula is unit-conflict free (UCF) if it contains no pair
of conflicting unit clauses. Lieberherr and Specker (J. ACM 28, 1981) showed that for each UCF CNF formula with $m$ clauses we can simultaneously satisfy at least $\pp m$ clauses, where $\pp =(\sqrt{5}-1)/2$. We improve the Lieberherr-Specker bound by showing that for each UCF CNF formula $F$ with $m$ clauses we can find, in polynomial time, a subformula $F'$ with $m'$ clauses such that we can simultaneously satisfy at least $\pp m+(1-\pp)m'+(2-3\pp)n''/2$ clauses (in $F$), where
$n''$ is the number of variables in $F$ which are not in $F'$.

We consider two parameterized versions of MAX-SAT, where the parameter is the number of satisfied clauses above the bounds $m/2$ and $m(\sqrt{5}-1)/2$. The former bound is tight for general formulas, and the later is tight for UCF formulas. Mahajan and Raman (J. Algorithms 31, 1999) showed that every instance of the first parameterized problem can be transformed, in polynomial time, into an equivalent one with at most $6k+3$ variables and $10k$ clauses. We improve this to $4k$ variables and $(2\sqrt{5}+4)k$ clauses. Mahajan and Raman conjectured that the second parameterized problem is fixed-parameter tractable (FPT). We show that the problem is indeed FPT by describing a polynomial-time algorithm that transforms any problem instance into an equivalent one with at most $(7+3\sqrt{5})k$ variables. Our results are obtained using our improvement of the Lieberherr-Specker bound above.
\end{abstract}


\section{Introduction}\label{sec:Intro}

Let $F = (V,C)$ be a CNF formula, with a set $V$ of variables and a multiset $C$ of non-empty clauses, $m=|C|$ (i.e., $m$ is the number of clauses in $C$; each clause is counted as many times as it appears in $C$), and $\maxw(F)$ is the maximum
number of clauses that can be satisfied by a truth assignment.
With a random assignment of truth values to the variables, the probability of a clause being satisfied is at least $1/2$.
Thus,  $\maxw(F) \ge m/2$ for any $F$. This bound is tight when $F$ consists of pairs of {\em conflicting unit clauses} $(x)$ and $(\bar{x})$.
Since each truth assignment satisfies exactly one clause in each pair of conflicting unit clauses, it is natural to reduce
$F$ to the {\em unit-conflict free (UCF)} form
by deleting all pairs of conflicting clauses. If $F$ is UCF, then Lieberherr and Specker \cite{LieberherrSpecker81} proved
that ${\rm sat}(F)\ge \pp m$, where $\pp =(\sqrt{5}-1)/2$ (golden ratio inverse), and that for any $\epsilon > 0$ there are UCF CNF formulae $F$ for which
$\maxw(F) < m(\pp + \epsilon)$. Yannakakis \cite{Yannakakis94} gave a short probabilistic proof that ${\maxw(F)} \ge \pp m$ by showing
that if the probability of every variable appearing in a unit clause being assigned {\sc true} is $\pp$ (here we assume that for all
such variables $x$ the unit clauses are of the form $(x)$) and the probability of every other variable being assigned {\sc true}
is $1/2,$ then the expected number of satisfied clauses is $\pp m$.

A formula $F'=(V',C')$ is called a {\em subformula} of a CNF formula $F=(V,C)$ if $C'\subseteq C$ and $V'$ is the set of variables in $C'.$
If $F'$ is a subformula of $F$ then $F\setminus F'$ denotes the subformula obtained from $F$ by deleting all clauses of $F'$.
A formula $F=(V,C)$ is called {\em expanding} if for each $X\subseteq V$, the number of clauses containing at least one variable from $X$ is at least $|X|$ \cite{Sze2004}.
It is known (this involves so-called matching autarkies, see Section \ref{sec:NT} for details) that for each CNF formula $F=(V,C)$ a subformula $F'=(V',C')$  can be found in polynomial time such that ${\rm sat}(F)={\rm sat}(F\setminus F')+|C'|$ and the subformula
$F\setminus F'$ is expanding.
In this paper, the main technical result is that ${\rm sat}(F) \ge \pp |C| + (2-3\pp)|V|/2$ for every expanding UCF CNF formula $F = (V,C)$.
Combining this inequality with the previous equality for ${\rm sat}(F)$, we conclude that for each UCF  CNF formula $F=(V,C)$
a subformula $F'=(V',C')$ can be found in polynomial time such that $${\rm sat}(F)\ge \pp |C|+(1-\pp)|C'| + (2-3\pp)|V\setminus V'|/2.$$
The last inequality improves the Lieberherr-Specker lower bound on ${\rm sat}(F)$.

Mahajan and Raman \cite{Mahajan99} were the first to recognize both practical and theoretical importance of parameterizing maximization
problems above tight lower bounds. (We give some basic
terminology on parameterized algorithms and complexity in the next section.) They considered \MST{} parameterized above the tight lower bound  $m/2$:

\begin{quote}
  {\bfseries {\sc SAT-A($m/2$)}}\\
  \emph{Instance:} A CNF formula $F$ with $m$ clauses.\\
  \emph{Parameter:} A nonnegative integer~$k$.\\
  \emph{Question:} Decide whether ${\rm sat}(F)\ge m/2 + k.$
\end{quote}

Mahajan and Raman proved that {\sc SAT-A($m/2$)} is fixed-parameter tractable by obtaining a problem kernel with at most $6k+3$ variables
and $10k$ clauses. We improve on this by obtaining a kernel with at most $4k$ variables and $8.47k$ clauses.

Since $\pp m$ rather than $m/2$ is an asymptotically tight lower bound for UCF CNF formulae, Mahajan and Raman \cite{Mahajan99}
also introduced the following parameterization of \MST{}:

\begin{quote}
  {\bfseries {\sc SAT-A($\pp m$)}}\\
  \emph{Instance:} A UCF CNF formula $F$ with $m$ clauses.\\
  \emph{Parameter:} A nonnegative integer~$k$.\\
  \emph{Question:} Decide whether ${\rm sat}(F)\ge \pp m + k.$
\end{quote}

Mahajan and Raman conjectured that {\sc SAT-A($\pp m$)}
is fixed-parameter tractable. To solve the conjecture in the affirmative, we show the existence of an $O(k)$-variable kernel
for {\sc SAT-A($\pp m$)}. This result follows from
our improvement of the Lieberherr-Specker lower bound.

The rest of this paper is organized as follows.  In Section \ref{sec:NT}, we give further terminology and notation and some basic results.
Section \ref{sec:MR} proves the improvement of the Lieberherr-Specker lower bound on ${\rm sat}(F)$ assuming correctness of
the following lemma: if $F=(V,C)$ is a compact CNF formula,  then $\maxw(F) \ge \pp |C| + (2-3\pp)|V|/2$
(we give definition of a compact CNF formula in the next section). We prove this non-trivial lemma in Section \ref{sec:ML}.
In Section \ref{sec:fpt} we solve the conjecture of Mahajan and Raman \cite{Mahajan99} in the affirmative and improve their result on {\sc SAT-A($m/2$)}.
We conclude the paper with discussions and open problems.

\section{Additional Terminology, Notation and Basic Results}\label{sec:NT}

We let $F = (V,C)$ denote a CNF formula with a set of variables $V$ and a multiset of clauses $C$.
It is normally assumed that each clause may appear multiple times in $C$. For the sake of convenience, we assume that each clause appears at
most once, but allow each clause to have an integer \emph{weight}. (Thus, instead of saying a clause $c$ appears $t$ times, we will say
that $c$ has weight $t$). If at any point a particular clause $c$ appears more than once in $C$, we replace all occurrences of $c$ with a
single occurrence of the same total weight.
We use $w(c)$ to denote the weight of a clause $c$. For any clause $c \notin C$ we set $w(c)=0$.
If $C' \subseteq C$ is a subset of clauses, then $w(C')$ denotes the sum of the weights of the clauses in $C'$. For a formula $F=(V,C)$
we will often write $w(F)$ instead of $w(C)$.

For  a formula $F=(V,C)$ and a subset $U\subseteq V$ of variables, $F_U$ denotes the subformula of $F$ obtained from $F$ by deleting all clauses without variables in $U$.

For a CNF formula $F=(V,C)$, a {\em truth assignment} is a function $\alpha : V \rightarrow \{ \textsc{true, false} \}$.
A truth assignment $\alpha$ \emph{satisfies} a clause $c$ if there exists $x \in V$ such that $x \in c$ and
$\alpha(x)=$ {\sc true}, or $\bar{x} \in c$ and $\alpha(x)=$  {\sc false}.  The {\em weight}
of a truth assignment is the sum of the weights of all clauses satisfied by the assignment. The maximum weight of
a truth assignment for $F$ is denoted by $\maxw(F)$.

A function $\beta : U \rightarrow \{ \textsc{true, false} \}$, where $U$ is a subset of $V$ is called a {\em partial truth assignment}.
A partial truth assignment $\beta : U \rightarrow \{ \textsc{true, false} \}$ is an {\em autarky} if $\beta$ satisfies all clauses of $F_U$. Autarkies are of interest, in particular, due to the following simple fact whose trivial proof is omitted.

\begin{lemma}\label{lem:aut}
Let $\beta : U \rightarrow \{ \textsc{true, false} \}$ be an autarky for a CNF formula $F$.
Then ${\rm sat}(F)=w(F_U)+{\rm sat}(F\setminus F_U).$
\end{lemma}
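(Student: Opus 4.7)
The plan is to establish the two inequalities $\mathrm{sat}(F) \le w(F_U) + \mathrm{sat}(F\setminus F_U)$ and $\mathrm{sat}(F) \ge w(F_U) + \mathrm{sat}(F\setminus F_U)$ separately. The autarky hypothesis is only needed for the lower bound; the upper bound is a generic additivity fact.

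For the upper bound, I would observe that the clauses of $F_U$ and $F\setminus F_U$ partition $C$, so the weight of clauses satisfied by any single assignment to $V$ is at most $\mathrm{sat}(F_U) + \mathrm{sat}(F\setminus F_U)$ (where we allow different optimal assignments on each side). Bounding the first term by $w(F_U)$ gives the inequality. This step does not use the autarky at all.

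For the matching lower bound, I would construct an explicit assignment achieving the right-hand side. Pick an assignment $\gamma : V\setminus U \to \{\textsc{true},\textsc{false}\}$ that attains $\mathrm{sat}(F\setminus F_U)$ on the subformula $F\setminus F_U$ (whose clauses use only variables in $V\setminus U$, so this is well-defined), and define the global assignment $\alpha$ on $V$ by $\alpha|_U = \beta$ and $\alpha|_{V\setminus U} = \gamma$. On clauses of $F\setminus F_U$, no variable of $U$ appears, so $\alpha$ agrees with $\gamma$ and satisfies weight $\mathrm{sat}(F\setminus F_U)$. On clauses of $F_U$, the autarky property guarantees that every such clause already contains a literal made true by $\beta$, so $\alpha$ satisfies every clause of $F_U$, contributing $w(F_U)$. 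Summing gives the lower bound.

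There is essentially no obstacle here: the only point worth double-checking is that the autarky property lets us extend $\beta$ to the rest of $V$ without breaking any clause of $F_U$, which is immediate since extra literal assignments can only keep a satisfied clause satisfied. This is precisely why the authors omit the proof as trivial.
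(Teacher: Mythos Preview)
Your argument is correct: the upper bound follows from the partition of clauses and the trivial bound $\mathrm{sat}(F_U)\le w(F_U)$, and the lower bound follows by gluing the autarky $\beta$ with an optimal assignment for $F\setminus F_U$, using that every clause of $F_U$ is already satisfied by a literal over $U$. The paper itself omits the proof as trivial, so there is no approach to compare against; your write-up simply fills in the routine details.
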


A version of Lemma \ref{lem:aut} can be traced back to Monien and Speckenmeyer \cite{MS1985}.

Recall that a formula $F=(V,C)$ is called {\em expanding} if $|X|\le w(F_X)$ for each $X\subseteq V$.
We associate a bipartite graph with a CNF formula $F=(V,C)$ as follows: the {\em bipartite graph} $B_F$ {\em of $F$}
has partite sets $V$ and $C$ and the edge $vc$ is in $B_F$ if and only if the variable $v$ or its negation $\bar{v}$ appears in the clause $c$.
Later we will make use of the following result which is a version of Hall's Theorem on matchings in bipartite graphs (cf. \cite{West}).

\begin{lemma}\label{lem:HT}
The bipartite graph $B_F$ has a matching covering $V$ if and only if $F$ is expanding.
\end{lemma}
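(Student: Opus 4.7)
The plan is to reduce the lemma to Hall's marriage theorem for bipartite graphs. Hall's theorem states that a bipartite graph with bipartition $(X,Y)$ admits a matching saturating $X$ if and only if $|N(S)| \ge |S|$ for every $S \subseteq X$. Applied to $B_F$ with $X = V$, this gives a matching covering $V$ precisely when $|N_{B_F}(S)| \ge |S|$ for every $S \subseteq V$. So the whole task is to identify Hall's neighborhood condition with the expanding condition.

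The key observation is to unpack what $N_{B_F}(S)$ is, for $S \subseteq V$. By the definition of $B_F$, a clause $c \in C$ is adjacent to some variable in $S$ exactly when $c$ contains some variable from $S$ (possibly negated); equivalently, $c$ is a clause of $F_S$. Thus $N_{B_F}(S)$ is precisely the set (or multiset, see below) of clauses of $F_S$.

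The one point to be careful about is the weight/multiplicity convention. The formula $F = (V,C)$ has $C$ as a multiset, and the paper handles this by storing weights $w(c)$; the quantity $w(F_S)$ is the total weight of clauses of $F_S$, i.e.\ the number of clauses of $F_S$ counted with multiplicity. To make Hall's theorem apply cleanly, I would interpret $B_F$ as having $w(c)$ parallel copies of each clause-vertex $c$ (equivalently, work with the original multiset $C$ before it is collapsed to weights). With this convention $|N_{B_F}(S)| = w(F_S)$, so the inequality $|S| \le w(F_S)$ of the expanding definition is exactly Hall's condition $|S| \le |N_{B_F}(S)|$. Applying Hall's theorem in both directions completes the proof.

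The main obstacle is essentially nil, beyond choosing the right bookkeeping for multiplicities: once one notes that edges of $B_F$ from $S$ go exactly to the clauses of $F_S$, and that $w(F_S)$ is the multiplicity-counted neighborhood size, the lemma is just a restatement of Hall's theorem.
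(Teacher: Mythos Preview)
Your proposal is correct and matches the paper's own treatment: the paper does not give a proof but simply presents the lemma as ``a version of Hall's Theorem on matchings in bipartite graphs'' with a reference to a textbook. Your care about the weight/multiplicity convention is well placed (and indeed is the only nontrivial point here), since with the paper's weighted-set convention for $C$ one must read $B_F$ with $w(c)$ copies of each clause-vertex for Hall's condition $|S|\le |N_{B_F}(S)|$ to coincide with $|S|\le w(F_S)$.
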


We call a CNF formula $F= (V,C)$ {\em compact} if the following conditions hold:

\begin{enumerate}
 \item All clauses in $F$ have the form $(x)$ or $(\bar{x} \vee \bar{y})$ for some $x,y \in V$.
\item For every variable $x \in V$, the clause  $(x)$ is in $C$.
\end{enumerate}

A \emph{parameterized problem} is a subset $L\subseteq \Sigma^* \times
\mathbb{N}$ over a finite alphabet $\Sigma$. $L$ is
\emph{fixed-parameter tractable} if the membership of an instance
$(I,k)$ in $\Sigma^* \times \mathbb{N}$ can be decided in time
$f(k)|I|^{O(1)}$ where $f$ is a function of the
{\em parameter} $k$ only~\cite{DowneyFellows99,FlumGrohe06,Niedermeier06}.
Given a parameterized problem $L$,
a \emph{kernelization of $L$} is a polynomial-time
algorithm that maps an instance $(x,k)$ to an instance $(x',k')$ (the
\emph{kernel}) such that (i)~$(x,k)\in L$ if and only if
$(x',k')\in L$, (ii)~ $k'\leq h(k)$, and (iii)~$|x'|\leq g(k)$ for some
functions $h$ and $g$.
It is well-known \cite{DowneyFellows99,FlumGrohe06,Niedermeier06} that a decidable parameterized problem $L$ is fixed-parameter
tractable if and only if it has a kernel. By replacing  Condition (ii) in the definition of a kernel  by $k'\le k$,
we obtain a definition of a {\em proper kernel} (sometimes, it is called a {\em strong kernel}); cf. \cite{AF,CFM}.

\section{New Lower Bound for $\maxw(F)$}\label{sec:MR}

We would like to prove a lower bound on ${\rm sat}(F)$ that includes the number of variables as a factor. It is clear that for general CNF formula $F$ such a bound is impossible. For consider a formula containing a single clause $c$ containing a large number of variables. We can arbitrarily increase the number of variables in the formula, and the maximum number of satisfiable clauses will always be 1.
We therefore need a reduction rule that cuts out 'excess' variables. Our reduction rule is based on the following lemma proved in Fleischner et al. \cite{FKS2002} (Lemma 10) and Szeider \cite{Sze2004} (Lemma 9).

\begin{lemma}\label{lem:red}
Let $F=(V,C)$ be a CNF formula.
Given a maximum matching in the bipartite graph $B_F$, in time $O(|C|)$ we can find an autarky $\beta : U \rightarrow \{ \textsc{true, false} \}$ such that
$|X|+1 \le w(F_X)$ for
every $X \subseteq V \backslash U.$
\end{lemma}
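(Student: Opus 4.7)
The plan is to exploit the supermodularity of the deficiency function $\delta(X) := |X| - w(F_X)$, which follows from the submodularity of $X \mapsto w(F_X)$ (each clause is counted once as soon as any of its variables is included). I would take $U$ to be the unique largest subset of $V$ with $\delta(U) = \delta^*$, where $\delta^* := \max_{X \subseteq V}\delta(X)$; note $\delta^* \ge \delta(\emptyset) = 0$, and uniqueness of such a largest set follows because the family of $\delta^*$-attaining sets is closed under union, by supermodularity.

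With this definition, the required inequality is almost immediate. If some non-empty $X \subseteq V \setminus U$ had $\delta(X) \ge 0$, then since $X \cap U = \emptyset$, supermodularity gives $\delta(X\cup U) \ge \delta(X) + \delta(U) \ge \delta^*$; maximality of $\delta^*$ forces equality, making $X\cup U \supsetneq U$ also $\delta^*$-attaining, contradicting the choice of $U$. Hence $\delta(X) < 0$, i.e.\ $|X|+1 \le w(F_X)$.

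For the autarky, a second application of supermodularity establishes Hall's condition in the direction of matching clauses of $F_U$ into variables of $U$: if some $C' \subseteq F_U$ had $|N(C') \cap U| < w(C')$, then $U \setminus N(C')$ would have deficiency strictly greater than $\delta^*$, a contradiction. Hall's theorem therefore yields a matching $M_U$ that covers $F_U$. For each $c \in F_U$ with matched partner $v_c \in U$, set $\beta(v_c)$ to the truth value that makes $v_c$ satisfy $c$, and assign the remaining variables in $U$ arbitrarily; every clause in $F_U$ is then satisfied by its own $v_c$, so $\beta$ is an autarky on $U$.

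The main obstacle is the algorithmic piece: extracting $U$ and $M_U$ in $O(|C|)$ time from the given maximum matching $M$ in $B_F$. I would run an $M$-alternating reachability search from the $V$-vertices unmatched by $M$, which produces the maximum strictly-deficient set via a Dulmage--Mendelsohn-type decomposition, and then perform a short closure step that adjoins any still-unreached variable whose addition keeps $\delta$ at $\delta^*$ (the ``tight'' extension). Since $B_F$ has $O(|C|)$ edges, the whole procedure is linear time, and $M_U$ can be read off as the restriction of $M$ to the bipartite subgraph on $U \cup F_U$.
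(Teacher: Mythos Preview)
The paper does not prove this lemma; it quotes it from Fleischner--Kullmann--Szeider and Szeider. Your argument is essentially the standard one those references use (the deficiency function $\delta$ is supermodular, the inclusion-maximal maximiser is the right $U$, and a matching autarky on $U$ satisfies all of $F_U$), so in spirit you have reconstructed the cited proof correctly, and the structural part---supermodularity gives $\delta(X)<0$ for nonempty $X\subseteq V\setminus U$, and Hall's condition on the $F_U$ side follows from maximality of $\delta(U)$---is sound.

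Two points in the algorithmic paragraph need tightening. First, an $M$-alternating search from the unmatched \emph{variable} side produces the inclusion-\emph{minimal} maximum-deficiency set, not the maximal one you need; your ``short closure step that adjoins any still-unreached variable whose addition keeps $\delta$ at $\delta^*$'' is not a well-defined procedure (adding single variables one at a time need not preserve $\delta=\delta^*$, and the tight component you want is not reachable this way). The clean fix is to run the alternating search from the unmatched \emph{clause} side and take $U$ to be the variables \emph{not} reached---this is the complement of the vertical part in the Dulmage--Mendelsohn decomposition, and then the restriction of $M$ to $U\cup F_U$ automatically saturates $F_U$, giving you $M_U$ for free. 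Second, the assertion that $B_F$ has $O(|C|)$ edges is false in general: the edge count equals the total number of literal occurrences, which is the formula length, not the number of clauses. The alternating search is linear in the size of $B_F$, so the running time you can actually justify is linear in the input, not $O(|C|)$ in the paper's sense of $|C|=m$.
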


Note that the autarky found in Lemma \ref{lem:red} can be empty, i.e., $U=\emptyset$.
An autarky found by the algorithm of Lemma \ref{lem:red} is of a special kind, called a matching autarky; such autarkies were used first by Aharoni and Linial \cite{AL1986}. Results similar to Lemma \ref{lem:red} have been obtained in the parameterized complexity literature as well, see, e.g., \cite{LokshtanovPhD}.

Lemmas \ref{lem:aut} and \ref{lem:red} immediately imply the following:

\begin{theorem} \label{match}\cite{FKS2002,Sze2004}
Let $F$ be a CNF formula and let $\beta : U \rightarrow \{ \textsc{true, false} \}$ be an autarky found by the algorithm of Lemma \ref{lem:aut}.
Then ${\rm sat}(F)={\rm sat}(F\setminus F_U)+w(F_U)$ and $F\setminus F_U$ is an expanding formula.
\end{theorem}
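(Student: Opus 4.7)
The plan is to derive both statements by composing Lemmas \ref{lem:aut} and \ref{lem:red}, with essentially no new computation. First, since $\beta$ is (by its definition) an autarky, applying Lemma \ref{lem:aut} directly gives ${\rm sat}(F)=w(F_U)+{\rm sat}(F\setminus F_U)$; this is immediate. (The reference ``Lemma \ref{lem:aut}'' in the theorem statement is read as a typo for Lemma \ref{lem:red}, which is the result that actually supplies the algorithm producing $\beta$.)

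For the expanding claim, I would take an arbitrary $X$ in the variable set of $F\setminus F_U$ and verify $|X|\le w((F\setminus F_U)_X)$. Since $X \subseteq V\setminus U$, Lemma \ref{lem:red} gives the surplus bound $|X|+1 \le w(F_X)$, which is one stronger than expanding. The work lies in transferring this bound from $F$ to $F\setminus F_U$: a clause of $F$ contributing to $w(F_X)$ might also contain a variable in $U$, and hence vanish when $F_U$ is deleted. To handle this, I would lean on the matching-autarky construction underlying Lemma \ref{lem:red}: the set $U$ is chosen so that any maximum matching of $B_F$ restricts to a matching of $B_{F\setminus F_U}$ that covers all of $V\setminus U$. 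By Lemma \ref{lem:HT}, the existence of such a matching is equivalent to $F\setminus F_U$ being expanding, which is exactly what we want.

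Thus my two-line proof is: (i) Lemma \ref{lem:aut} for the identity; (ii) the restriction of the maximum matching of $B_F$ to $B_{F\setminus F_U}$, combined with Lemma \ref{lem:HT}, for the expansion.

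The main obstacle is the structural point in step (ii): that the matching-autarky construction is specifically designed so that removing $F_U$ leaves behind a bipartite graph whose variable side is still covered by a matching. This is not quite spelled out in the statement of Lemma \ref{lem:red}, but it is the standard alternating-path content of matching autarkies (as in Aharoni--Linial and the cited Fleischner et al./Szeider proofs) and is in fact what the surplus ``$+1$'' in Lemma \ref{lem:red} reflects. Once this is granted, the theorem collapses to a direct invocation of the two lemmas.
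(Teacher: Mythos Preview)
Your proposal is correct and matches the paper's approach exactly: the paper gives no explicit proof and simply states that the theorem is an immediate consequence of Lemmas~\ref{lem:aut} and~\ref{lem:red}. You have unpacked this ``immediate'' claim, correctly flagged the typo (the algorithm is from Lemma~\ref{lem:red}), and even addressed the subtlety---that the surplus bound in Lemma~\ref{lem:red} is stated for $w(F_X)$ rather than $w((F\setminus F_U)_X)$---by appealing to the underlying matching-autarky structure and Lemma~\ref{lem:HT}; this is more care than the paper itself provides.
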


Our improvement of the Lieberherr-Specker lower bound on $\maxw(F)$ for a UCF CNF formula $F$ will follow immediately from Theorems \ref{match} and \ref{main} (stated below). It is much harder to prove Theorem \ref{main} than Theorem \ref{match}, and our proof of Theorem \ref{main} is based on the following quite non-trivial lemma that will be proved in the next section.

\begin{lemma}\label{compact}
  If $F=(V,C)$ is a compact CNF formula, then there exists a truth assignment with weight at least
$$\pp w(C) + \frac{|V|(2-3\pp)}{2},$$
where $\pp=(\sqrt{5}-1)/2$, and such an assignment can be found in polynomial time.
\end{lemma}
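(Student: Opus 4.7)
The first observation is arithmetic: the golden-ratio identity $\pp^2 = 1 - \pp$ yields $\pp^4 = 2 - 3\pp$, so the bound rewrites as
\[
\maxw(F) \geq \pp\, w(C) + \frac{|V|\,\pp^4}{2}.
\]
Yannakakis's random assignment (each variable independently \textsc{true} with probability $\pp$) already gives $E[\maxw]=\pp\,w(C)$ on any compact formula: every unit clause $(x)$ is satisfied with probability $\pp$, and every binary clause $(\bar x \vee \bar y)$ with probability $1-\pp^2=\pp$. So the task is to beat the Yannakakis baseline by $|V|\,\pp^4/2$, exploiting compactness (the unit clause $(x)$ exists with weight $u_x\ge 1$ for every $x\in V$).

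I would proceed by induction on $|V|$ together with two reduction rules that peel off ``easy'' variables. Write $B_x=\sum_y w((\bar x\vee\bar y))$ and let $F^{-x}$ denote the compact subformula obtained by deleting $x$ and all its incident clauses.

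\emph{Rule A (isolated unit).} If $B_x=0$, set $\alpha(x)=\textsc{true}$ and recurse on $F^{-x}$. The inductive weight plus the gain $u_x$ beats the target because $u_x(1-\pp)=u_x\pp^2\ge \pp^4/2$, which holds as $u_x\ge 1\ge \pp^2/2$.

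\emph{Rule B (heavy neighborhood).} If $\pp B_x\ge u_x+\pp^3/2$, set $\alpha(x)=\textsc{false}$ (all incident binary clauses become satisfied, gaining $B_x$) and recurse on $F^{-x}$. The inequality closes the induction because it rearranges exactly to $B_x(1-\pp)\ge \pp u_x+\pp^4/2$.

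The remaining case, which is where I expect the main difficulty, is when $F$ is \emph{irreducible}: every variable satisfies $B_x>0$ and $\pp B_x<u_x+\pp^3/2$. Because weights are positive integers and $\pp\approx 0.618$, irreducibility pins $B_x$ to a short list of values given $u_x$; for instance, $u_x=1$ forces $B_x=1$ (since $1/\pp+\pp^2/2<2$), so the binary-clause graph is a perfect matching of unit-weight edges. On each matched pair $\{x,y\}$, setting $\alpha(x)=\textsc{true}$ and $\alpha(y)=\textsc{false}$ satisfies clauses of total weight $u_x+b_{xy}=2$, matching the per-pair target $\pp(u_x+u_y+b_{xy})+2\cdot\pp^4/2=3\pp+\pp^4=2$ exactly, so the lemma is tight here. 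For general irreducible formulas, the binary-clause graph decomposes into a bounded family of atomic structures (matching edges, small stars, short paths) whose sizes are controlled by the irreducibility bounds on $u_x$ and $B_x$. I expect the proof to produce an explicit optimal assignment on each component and to verify arithmetically that the weight collected on the component matches the target contribution $\pp\cdot(\text{clause weight in component})+(\text{variables in component})\cdot\pp^4/2$. The principal technical obstacle is carrying out this component-by-component case analysis uniformly across all permitted integer weights, while maintaining polynomial-time constructibility of the final assignment.
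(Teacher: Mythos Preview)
Your arithmetic and your Rules~A and~B are correct, but the irreducible case is a genuine gap: your claim that the binary-clause graph then decomposes into bounded pieces is false. Irreducibility only says $B_x\ge 1$ and $\pp B_x<u_x+\pp^3/2$; since nothing bounds $u_x$ from above, $B_x$ can be made arbitrarily large by raising $u_x$. Concretely, the path $x_1\text{--}x_2\text{--}\cdots\text{--}x_n$ with all binary weights~$1$, endpoint unit weights~$1$, and interior unit weights~$2$ is irreducible for every~$n$ (interior vertices satisfy $2\pp\approx 1.236<2+\pp^3/2\approx 2.118$); so is the complete graph $K_n$ with all edge weights~$1$ and all $u_x=n$. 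Hence there is no finite catalogue of ``atomic structures'' to check, and the component-by-component analysis you propose is essentially the full lemma again.

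The paper avoids this by using stronger reductions before declaring the residual instance rigid. Crucially, it has a rule that sets $x$ \textsc{true} whenever $u_x\ge B_x$ (not merely when $B_x=0$); the point is that setting $x$ \textsc{true} does not delete the incident binary clauses but shortens each $(\bar x\vee\bar y)$ to $(\bar y)$, which is then merged with the existing $(y)$---this merging step is where the golden-ratio identity $\pp^2+\pp=1$ is really used and is what your Rule~A (delete $x$ together with all its clauses) misses. A second rule sets $x$ \textsc{true} whenever $u_x-\pp B_x\ge \pp^2/2$. Only after both rules fail (so $u_x<B_x$ and $u_x-\pp B_x<\pp^2/2$ for \emph{every} $x$) does the paper pass to a structural endgame: either some binary clause is \emph{good} (no degree-$2$ vertex has exactly its endpoints as neighbourhood), handled by assigning one endpoint \textsc{false} and the other \textsc{true}, or else the binary-clause graph is forced to be a disjoint union of triangles, each of which is dispatched by trying all eight assignments on three variables. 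Your Rules~A and~B do not drive the instance to any comparably rigid structure, so the finishing argument you sketch cannot be carried out as stated.
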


The next proof builds on some of the basic ideas in \cite{LieberherrSpecker81}.

\begin{theorem}\label{main}
  If $F=(V,C)$ is an expanding UCF CNF formula, then there exists a truth assignment with weight at least
$$\pp w(C) + \frac{|V|(2-3\pp)}{2},$$
where $\pp=(\sqrt{5}-1)/2$ and such an assignment can be found in polynomial time.
\end{theorem}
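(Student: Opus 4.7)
The plan is to reduce $F$ to a compact formula $F^{*}$ with the same variable set and the same total weight, such that $\maxw(F)\ge \maxw(F^{*})$, and then invoke Lemma~\ref{compact}. Because $F$ is UCF, at most one of $(v)$ and $(\bar v)$ is a unit clause of $F$ for each $v$, so I would first rename variables to make every unit clause positive. Write $V_1$ for the variables appearing in some unit clause and $V_2 = V\setminus V_1$. I would then use the expanding property to choose a matching $\mu$ in $B_F$ covering $V$ with $\mu(v)=(v)$ for $v\in V_1$ and $\mu(v)\in C_{\ge 2}$ for $v\in V_2$, where $C_{\ge 2}$ is the set of non-unit clauses of $F$: no variable of $V_2$ occurs in any unit clause, so for $X\subseteq V_2$ we have $w((C_{\ge 2})_X)=w(F_X)\ge|X|$ by expansion, and Lemma~\ref{lem:HT} yields a matching from $V_2$ into $C_{\ge 2}$ which I combine with the trivial matching $v\mapsto(v)$ on $V_1$. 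For each $v\in V_2$ with $\bar v\in \mu(v)$, I then flip $v$ (safe because $v$ has no unit clause), so that $v$ occurs positively in $\mu(v)$ for every $v\in V$.

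The construction of $F^{*}$ now proceeds in two rewriting steps, each weight-preserving and such that every new clause logically implies the one it replaces; hence for every truth assignment $\alpha$ the total satisfied weight can only decrease. \emph{Step A.} For each $v\in V_2$, replace one copy of $\mu(v)$ by the unit clause $(v)$; the resulting formula $F_A$ has $(v)\in F_A$ for every $v\in V$, and $V(F_A)=V$, $w(F_A)=w(C)$. \emph{Step B.} For every non-unit clause $c$ of $F_A$: if $c$ has two distinct negative literals $\bar x,\bar y$, replace $c$ by $(\bar x\vee\bar y)$; otherwise $c$ has at most one negative literal and hence, since $|c|\ge 2$, a positive literal $x$, and I replace $c$ by $(x)$, absorbing its weight into the already-present unit $(x)$. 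The result $F^{*}$ is compact, with $V(F^{*})=V$, $w(F^{*})=w(C)$, and $\maxw(F)\ge \maxw(F^{*})$.

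Lemma~\ref{compact} applied to $F^{*}$ then supplies, in polynomial time, an assignment of weight at least $\pp\,w(C)+|V|(2-3\pp)/2$ on $F^{*}$; reversing the sign flips recovers an assignment on $F$ achieving at least this weight, which is what we need. The main obstacle will be the book-keeping in Steps~A and~B: the variable set $V$ must survive throughout, which is why each variable is given its own unit clause in Step~A before any literals start disappearing from longer clauses in Step~B; every non-unit clause of $F_A$ must fall into one of the two cases of Step~B, which it does because a clause of length $\ge 2$ has either two negative literals or at least one positive literal; and the constraint $\mu(v)=(v)$ on $V_1$ is what leaves the existing unit clauses untouched so that no conflicting unit clause is introduced and no non-unit clause is rewritten unnecessarily.
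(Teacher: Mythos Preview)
Your proposal is correct and follows essentially the same route as the paper: use the expanding hypothesis to obtain a matching covering $V$, turn each matched clause into a positive unit clause for its variable (with sign flips as needed), then shrink every remaining non-unit clause to either a positive unit or a two-literal negative clause, and apply Lemma~\ref{compact} to the resulting compact formula. The only cosmetic differences are that you build the matching in two pieces (forcing $\mu(v)=(v)$ on $V_1$) and that your Step~B prefers keeping two negative literals where the paper prefers keeping one positive; both choices yield a compact formula with the same variable set and weight, so the argument goes through identically.
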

\begin{proof}
We will describe a polynomial-time transformation from $F$ to a compact CNF formula $F'$, such that $|V'| = |V|$ and  $w(C') = w(C)$,
and any truth assignment for $F'$ can be turned into truth assignment for $F$ of greater or equal weight. The theorem then follows from Lemma \ref{compact}.

By Lemma \ref{lem:HT}, there is a matching in the bipartite graph $B_F$ covering $V.$
For each $x \in V$ let $c_x$ be the unique clause associated with $x$ in this matching.
For each variable $x$, if the unit clause $(x)$ or $(\bar{x})$ appears in $C$, leave $c_x$ as it is for now. Otherwise, remove all variables except
$x$ from $c_x$. We now have that for every $x$, exactly one of $(x)$, $(\bar{x})$ appears in $C$.

If $(\bar{x})$ is in $C,$ replace every occurrence of the literal $\bar{x}$ in the clauses of $C$ with $x$, and replace every occurrence of $x$ with $\bar{x}$. We now have that Condition 2 in the definition of a compact formula is satisfied.
For any clause $c$ which contains more than one variable and at least one positive literal, remove all variables except one that occurs as a positive. For any clause which contains only negative literals, remove all but two variables. We now have that Condition 1 is satisfied. This completes the transformation.

In the transformation, no clauses or variables were completely removed, so $|V'| = |V|$ and $w(C') = w(C)$.
Observe that the transformation takes polynomial time, and that any truth assignment for the compact formula $F'$ can be turned into a truth assignment for $F$ of greater or equal weight. Indeed, for some truth assignment for $F'$, flip the assignment to $x$ if and only if we replaced occurrences of $x$ with $\bar{x}$ in the transformation. This gives a truth assignment for $F$ such that every clause will be satisfied if its corresponding clause in $F'$ is satisfied.
\end{proof}

Our main result follows immediately from Theorems \ref{match} and \ref{main}.

\begin{theorem}\label{main1}
 Every UCF CNF formula $F=(V,C)$ contains a (possibly empty) subformula $F'=(V',C')$ that can be found in polynomial time and such that $${\rm sat}(F)\ge \pp w(C)+(1-\pp)w(C') + (2-3\pp)|V\setminus V'|/2.$$
\end{theorem}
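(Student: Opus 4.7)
The plan is to combine Theorems \ref{match} and \ref{main} directly; the result is essentially their immediate assembly. First, I would apply the algorithm of Theorem \ref{match} to $F$, producing in polynomial time a matching autarky $\beta : U \to \{\textsc{true, false}\}$ with ${\rm sat}(F) = w(F_U) + {\rm sat}(F \setminus F_U)$ and with $F \setminus F_U$ expanding. I then declare $F' := F_U$ to be the desired subformula, letting $V'$ be its variable set and $C'$ its clause multiset, so that $w(F_U) = w(C')$.

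Next, the subformula $F \setminus F_U$ is still UCF, since deleting clauses cannot create a conflicting pair of unit clauses, and it is expanding by construction. Theorem \ref{main} therefore applies and produces in polynomial time a truth assignment for $F \setminus F_U$ of weight at least $\pp w(F \setminus F_U) + (2-3\pp)|V''|/2$, where $V''$ denotes the variable set of $F \setminus F_U$. Before plugging this in, I would do one small piece of bookkeeping: any $x \in V \setminus V'$ appears in $F$ but in no clause of $F_U$, hence appears in some clause of $F \setminus F_U$, so $V \setminus V' \subseteq V''$. Since $\pp = (\sqrt{5}-1)/2 < 2/3$, the coefficient $(2-3\pp)/2$ is strictly positive, so I may replace $|V''|$ by the possibly smaller $|V \setminus V'|$ without invalidating the bound.

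It then only remains to substitute ${\rm sat}(F \setminus F_U) \ge \pp w(F \setminus F_U) + (2-3\pp)|V \setminus V'|/2$ into the identity ${\rm sat}(F) = w(F_U) + {\rm sat}(F \setminus F_U)$, use $w(F \setminus F_U) = w(C) - w(C')$, and rearrange the expression $w(C') + \pp(w(C) - w(C'))$ as $\pp w(C) + (1-\pp)w(C')$ to obtain exactly the stated inequality. Because every step runs in polynomial time, $F'$ is produced in polynomial time as required. I do not anticipate a real obstacle: the substance of the result lives entirely in Theorems \ref{match} and \ref{main}, and the only subtlety in gluing them together is the $V''$-versus-$V \setminus V'$ bookkeeping just described.
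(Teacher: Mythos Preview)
The proposal is correct and takes essentially the same approach as the paper, which merely states that the result ``follows immediately from Theorems \ref{match} and \ref{main}.'' You have faithfully spelled out this immediate derivation, including the $V''$ versus $V\setminus V'$ bookkeeping that the paper leaves implicit.
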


\section{Proof of Lemma \ref{compact}}\label{sec:ML}

In this section, we use the fact that $\pp=(\sqrt{5}-1)/2$ is the positive root of the polynomial $\pp^2+\pp-1$.
We call a clause $(\bar{x} \vee \bar{y})$ \emph{good} if for every literal $\bar{z}$, the set of clauses containing $\bar{z}$ is not equal
to $\{(\bar{x} \vee \bar{z}),(\bar{y} \vee \bar{z})\}.$
We define $w_v(x)$ to be the total weight of all clauses containing the literal $x$, and $w_v(\bar{x})$ the total weight of all clauses
containing the literal $\bar{x}$. (Note that $w_v(\bar{x})$ is different from $w(\bar{x})$, which is the weight of the particular clause $(\bar{x})$.)
Let $\epsilon(x) = w_v(x) - \pp w_v(\bar{x})$. Let $\gamma = (2-3\pp)/2=(1-\pp)^2/2$ and let $\Tt{F}=\maxw(F)-\pp w(C)$.

To prove Lemma \ref{compact}, we will use an algorithm, Algorithm A, described below. We will show that, for any compact CNF formula $F=(V,C)$,
Algorithm A finds a truth assignment with weight at least $\pp w(C) + \gamma |V|$. Step 3 of the algorithm removes any clauses which are satisfied or
falsified by the given assignment of truth values to the variables. The purpose of Step 4 is to make sure the new formula is compact.

Algorithm A works as follows.
Let $F$ be a compact CNF formula. If $F$ contains a variable $x$ such that
we can assign $x$ {\sc true} and increase
sufficiently the average number of satisfied clauses, we do just that (see
Cases A and B of the algorithm). Otherwise,
to achieve similar effect we have to assign truth values to two or three
variables (see Cases C and D).
Step 3 of the algorithm removes any clauses which are satisfied or
falsified by the given assignment of truth values to the variables. The
purpose of Step 4 is to make sure the new formula is compact.

\begin{center}
\fbox{~\begin{minipage}{11cm}
\textsc{Algorithm A}

\smallskip

While $|V|>0$, repeat the following steps:
\begin{enumerate}
 \item For each $x \in V$, calculate $w_v(x)$ and $w_v(\bar{x})$.
\item Mark some of the variables as {\sc true} or {\sc false}, according to the following cases:

\begin{description}
\item[Case A:] \emph{There exists $x \in V$ with $w_v(x) \ge w_v(\bar{x})$.}
Pick one such $x$ and assign it  {\sc true}.

\item[Case B:] \emph{Case A is false, and there exists $x \in V$ with $(1-\pp)\epsilon(x) \geq \gamma$}.
Pick one such $x$ and assign it {\sc true}.

\item[Case C:] \emph{Cases A and B are false, and there exists a good clause.}
Pick such  a good clause $(\bar{x} \vee \bar{y})$, with (without loss of generality) $\epsilon(x) \ge \epsilon(y)$, and assign
$y$ {\sc false} and $x$  {\sc true}.

\item[Case D:]  \emph{Cases A, B and C are false.}
Pick any clause $(\bar{x} \vee \bar{y})$ and pick $z$ such that both clauses $(\bar{x} \vee \bar{z})$ and $(\bar{y} \vee \bar{z})$ exist.
Consider the six clauses $(x), (y), (z), (\bar{x} \vee \bar{y}), (\bar{x} \vee \bar{z}), (\bar{y} \vee \bar{z})$ and all $2^3$ 
assignments to the variables $x,y,z$, and pick an assignment maximizing the total weight of satisfied clauses among the six clauses. 
\end{description}

\item Perform the following simplification: For any variable $x$ assigned {\sc False}, remove any clause containing $\bar{x}$, remove the unit clause $(x)$, and remove $x$ from $V$.
For any variable $x$ assigned {\sc true}, remove the unit clause $(x)$, remove $\bar{x}$ from any clause containing $\bar{x}$ and remove $x$ from $V$.

\item For each $y$ remaining, if there is a clause of the form $(\bar{y})$, do the following: If the weight of this clause is greater than $w_v(y)$, then replace all clauses containing the variable $y$ (that is, literals $y$ or $\bar{y}$) with one clause $(y)$ of weight $w_v(\bar{y})-w_v(y)$. Otherwise remove $(\bar{y})$ from $C$ and change the weight of $(y)$ to $w(y)-w(\bar{y})$.

\end{enumerate}

\end{minipage}~}
\end{center}

In order to show that the algorithm finds a truth assignment with weight at least $\pp w(C) + \frac{|V|(2-3\pp)}{2}$, we need the following two lemmas.

\2

\begin{lemma}\label{lemTrue}
 For a formula $F$, if we assign a variable $x$ {\sc true}, and run Steps 3 and 4 of the algorithm, the resulting formula $F^*=(V^*,C^*)$
satisfies
\[
 \Tt{F} \ge \Tt{F^*} + (1-\pp) \epsilon(x).
\]
 Furthermore, we have $|V^*|=|V|-1$, unless there exists $y\in V^*$ such that $(y)$ and $(\bar{x} \vee \bar{y})$ are the only clauses containing
$y$ and they have the same weight. In this case, $y$ is removed from $V^*$.

\end{lemma}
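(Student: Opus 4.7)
The plan is to decompose the argument into the effect of Step 3 (which sets $x$ to \textsc{true}) and the effect of Step 4 (which restores compactness). Let $F''$ denote the intermediate formula $(V \setminus \{x\}, C'')$ obtained after Step 3 only. Because Step 3 deletes the unit clause $(x)$ of weight $w_v(x)$ and merely shortens each clause $(\bar{x} \vee \bar{y})$ to $(\bar{y})$ (preserving its weight), $w(C'') = w(C) - w_v(x)$; and because any assignment on $V \setminus \{x\}$ extends to $V$ by setting $x = \textsc{true}$, $\sat(F) \geq w_v(x) + \sat(F'')$. It then suffices to control $\sat(F'') - \pp w(C'')$ in terms of $\Delta(F^*)$.

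For Step 4, let $N = \SB y \SM (\bar{x} \vee \bar{y}) \in C \SE$; these are exactly the variables for which a new $(\bar{y})$ clause of weight $w_F((\bar{x} \vee \bar{y}))$ appears in $F''$, so $\sum_{y \in N} w_{F''}((\bar{y})) = w_v(\bar{x})$. I would process $N$ one variable at a time, producing $F'' = F_0, F_1, \dots, F_{|N|} = F^*$, and set $\Delta w_i := w(C_{i-1}) - w(C_i)$, $\Delta s_i := \sat(F_{i-1}) - \sat(F_i)$ at the step handling $y_i$. In the \emph{``otherwise''} case ($w((\bar{y_i})) \leq w(y_i)$), trimming $(y_i)$ by $w((\bar{y_i}))$ and deleting $(\bar{y_i})$ shifts the weight these two unit clauses contribute down by exactly $w((\bar{y_i}))$ under every assignment, giving $\Delta w_i = 2 w((\bar{y_i}))$ and $\Delta s_i = w((\bar{y_i}))$. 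In the \emph{``replace''} case ($w((\bar{y_i})) > w(y_i)$), replacing all clauses containing $y_i$ by a single $(y_i)$ of weight $w_v(\bar{y_i}) - w(y_i)$ yields $\Delta w_i = 2 w(y_i)$, and starting from any optimal assignment for $F_i$ but setting $y_i = \textsc{false}$ satisfies every clause containing $\bar{y_i}$ in $F_{i-1}$ (total current weight $w_v(\bar{y_i})$) while the replacement clause of $F_i$ contributes at most $w_v(\bar{y_i}) - w(y_i)$, so $\Delta s_i \geq w(y_i)$.

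The key algebraic input is $\pp^2 + \pp = 1$, equivalently $2\pp - 1 = \pp(1 - \pp)$. Applied to either case (in the replace case also using $w(y_i) < w((\bar{y_i}))$), it gives $\pp \Delta w_i - \Delta s_i \leq \pp(1 - \pp)\, w((\bar{y_i}))$. Since each unit clause $(\bar{y_i})$ involves only $y_i$ and is untouched by the earlier steps, $w_{F_{i-1}}((\bar{y_i})) = w_{F''}((\bar{y_i}))$; summing therefore yields $\pp K - K' \leq \pp(1 - \pp)\, w_v(\bar{x})$, where $K = w(C'') - w(C^*)$ and $K' = \sat(F'') - \sat(F^*)$. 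Plugging this into $\Delta(F) \geq w_v(x) + \sat(F'') - \pp(w_v(x) + w(C'')) = (1 - \pp) w_v(x) + \Delta(F^*) + K' - \pp K$ produces the required bound $\Delta(F) \geq \Delta(F^*) + (1 - \pp) \epsilon(x)$; the ``furthermore'' clause follows because a variable $y$ leaves $V^*$ iff every clause containing $y$ is destroyed, which in the ``otherwise'' case happens precisely when $w(y) = w((\bar{x} \vee \bar{y}))$ and those are the only clauses of $F$ containing $y$. The main obstacle I anticipate is the ``replace'' case, where clauses $(\bar{y_i} \vee \bar{z})$ shared with another variable $z$ are destroyed; the sequential framework above is designed precisely to absorb this, since the per-step bounds use only the current formula $F_{i-1}$ and the invariance of the unit clauses $(\bar{y_i})$ makes the sum $\sum_i w((\bar{y_i}))$ telescope cleanly to $w_v(\bar{x})$.
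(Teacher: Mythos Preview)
Your proof is correct and follows essentially the same approach as the paper's. The paper is simply more concise: it sets $q=\sum_y \min\{w((y)),w((\bar y))\}$, records $w(C^*)=w(C)-w_v(x)-2q$, $\sat(F^*)\le \sat(F)-w_v(x)-q$, and $q\le w_v(\bar x)$, then performs the same algebra using $1-\pp-\pp^2=0$; your sequential per-step bounds and the inequality $\pp K-K'\le \pp(1-\pp)\,w_v(\bar x)$ are exactly these three facts unpacked.
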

\begin{proof}
Observe that at Step 3, the clause $(x)$ (of weight $w_v(x)$) is removed, clauses of the form $(\bar{x} \vee \bar{y})$ (total weight $w_v(\bar{x})$) become $(\bar{y})$, and the variable $x$ is removed from $V$.

At Step 4, observe that for each $y$ such that $(\bar{y})$ is now a clause, $w(C)$ is decreased by $2w_y$ and $\maxw(F)$ is decreased by $w_y$, where
$w_y = \min \{ w(y) , w(\bar{y})\} $.  Let $q = \sum_y{w_y}$, and observe that $q \le w_v(\bar{x})$. A variable $y$ will only be removed at this stage if the clause $(\bar{x} \vee \bar{y})$ was originally in $C$. We therefore have

\begin{enumerate}
 \item $\maxw(F^*) \le \maxw(F) - w_v(x) - q$
\item $w(C^*) = w(C) - w_v(x) - 2q$
\end{enumerate}

Using the above, we get

\begin{center}
$\begin{array}{rcl}
  \Tt{F} & = & \maxw(F)-\pp\cdot w(C) \\
  &\ge & (w_v(x) + \maxw(F^*) + q) - \pp(w(C^*) + 2q + w_v(x)) \\
  & = & \Tt{F^*} + (1-\pp)w_v(x) - (2\pp-1)q \\
  & \geq & \Tt{F^*} + (1-\pp)(\epsilon(x)+\pp\cdot w_v(\bar{x})) - (2\pp-1) w_v(\bar{x}) \\
  & = & \Tt{F^*} + (1-\pp-\pp^2) w_v(\bar{x}) + (1-\pp) \epsilon(x) \\
  & = & \Tt{F^*} + (1-\pp) \epsilon(x). \hspace*{\fill}
\end{array}$
\end{center}
 \end{proof}

\2

\begin{lemma}\label{lemFalse}
For a formula $F$, if we assign a variable $x$ {\sc False}, and run Steps 3 and 4 of the algorithm, the resulting formula
$F^{**}=(V^{**},C^{**})$ has $|V^{**}|=|V|-1$ and satisfies
$
 \Tt{F} \ge \Tt{F^{**}} - \pp \epsilon(x).
$
\end{lemma}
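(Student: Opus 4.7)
The plan is to mirror the structure of the proof of Lemma \ref{lemTrue}, but this time accounting for the effect of setting $x$ to \textsc{False}. First I would carefully track what Steps 3 and 4 do. Since $F$ is compact, the only clauses involving $x$ are the unit clause $(x)$ (of total weight $w_v(x)$) and binary clauses of the form $(\bar{x}\vee \bar{y})$ (of total weight $w_v(\bar{x})$). Setting $x$ to \textsc{False} removes $(x)$ together with every such binary clause, and removes $x$ from $V$. Crucially, no new unit clauses $(\bar{y})$ are created, so Step 4 has nothing to do, and every other variable $y$ still has its unit clause $(y)$ present in $F^{**}$. This justifies $|V^{**}|=|V|-1$, and also shows $F^{**}$ remains compact.

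Next I would record the two quantitative consequences. The total weight satisfies $w(C^{**}) = w(C) - w_v(x) - w_v(\bar{x})$. For $\maxw$, the key observation is that any optimal truth assignment of $F^{**}$ can be extended to $F$ by giving $x$ the value \textsc{False}; this additionally satisfies every clause $(\bar{x}\vee\bar{y})$ (contributing $w_v(\bar{x})$), while failing $(x)$. Therefore
\begin{equation*}
\maxw(F) \ge \maxw(F^{**}) + w_v(\bar{x}).
\end{equation*}

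Now I would plug these two relations into $\Tt{F} = \maxw(F) - \pp\,w(C)$ and simplify, exactly as in the proof of Lemma \ref{lemTrue}:
\begin{align*}
\Tt{F} &\ge \bigl(\maxw(F^{**}) + w_v(\bar{x})\bigr) - \pp\bigl(w(C^{**}) + w_v(x) + w_v(\bar{x})\bigr) \\
 &= \Tt{F^{**}} + (1-\pp)\,w_v(\bar{x}) - \pp\,w_v(x).
\end{align*}
Using $w_v(x)=\epsilon(x)+\pp\,w_v(\bar{x})$ and the defining identity $\pp^2+\pp-1=0$, the coefficient of $w_v(\bar{x})$ becomes $(1-\pp-\pp^2)=0$, leaving $\Tt{F}\ge \Tt{F^{**}} - \pp\,\epsilon(x)$, as required.

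There is essentially no major obstacle here; the whole argument is a short algebraic computation. The only thing to be careful about is verifying that Step 4 genuinely does nothing in the \textsc{False} case, so that $F^{**}$ is still compact and $|V^{**}|=|V|-1$ holds exactly (without the kind of exceptional case that appeared in Lemma \ref{lemTrue}). Once that is in place, the cancellation via $\pp^2+\pp=1$ does all the work.
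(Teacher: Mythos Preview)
Your proposal is correct and follows essentially the same approach as the paper: both track the effect of Steps 3 and 4 (noting that Step 4 is vacuous here since no $(\bar{y})$ clauses are created), derive $w(C^{**})=w(C)-w_v(x)-w_v(\bar{x})$ and $\maxw(F)\ge\maxw(F^{**})+w_v(\bar{x})$, and then perform the identical algebraic simplification using $\pp^2+\pp-1=0$. Your write-up is slightly more explicit about why Step 4 does nothing and how the $\maxw$ inequality is obtained, but the argument is the same.
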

\begin{proof}
Observe that at Step 3, every clause containing the variable $x$ is removed, and no other clauses will be removed at Steps 3 and 4. Since the clause $(y)$ appears for every other variable $y$, this implies that $|V^{**}|=|V|-1$. We also have the following:  $\maxw(F^{**}) \le \maxw(F) - w_v(\bar{x})$ and $w(C^{**}) = w(C) - w_v(\bar{x}) - w_v(x).$
Thus,

\begin{center}
$\begin{array}{rcl}
  \Tt{F}  & = & \maxw(F)-\pp w(C) \\
  & \ge & (w_v(\bar{x}) + \maxw(F^{**})) - \pp(w(C^{**}) + w_v(\bar{x}) + w_v(x)) \\
  & = & \Tt{F^{**}} + (1-\pp)w_v(\bar{x}) - \pp w_v(x) \\
  & = & \Tt{F^{**}} + (1-\pp)w_v(\bar{x}) - \pp ( \epsilon(x)+\pp\cdot w_v(\bar{x}))  \\
  & = & \Tt{F^{**}} + (1-\pp-\pp^2)w_v(\bar{x}) - \pp \epsilon(x) \\
  & = & \Tt{F^{**}} - \pp \epsilon(x). \hspace*{\fill} \\
\end{array}$
\end{center}
 \end{proof}

Now we are ready to prove Lemma \ref{compact}.

\2

\noindent{\em Proof of Lemma \ref{compact}:}
We will show that Algorithm A finds a truth assignment with weight at least $\pp w(C) + \frac{|V|(2-3\pp)}{2}$. Note that the inequality
in the lemma can be reformulated as $\Tt{F} \geq \gamma |V|$.

Let $F$ and $\pp$ be defined as in the statement of the lemma.
Note that at each iteration of the algorithm, at least one variable is removed. Therefore, we will show the lemma by induction on $|V|$. If $|V|=0$ then
we are done trivially and if $|V|=1$ then we are done as $\maxw(F)=w(C) \geq \pp  w(C) + \gamma$ (as $w(C) \geq 1$).
So assume that $|V| \geq 2$.

For the induction step, let $F'=(V',C')$ be the formula resulting from $F$ after running Steps 1-4 of the algorithm, and assume that
$\Tt{F'} \geq \gamma |V'|$.
We show that $\Tt{F} \geq \gamma |V|$, by analyzing each possible case in Step 2 separately.

\2

{\bf Case A:} {\em $w_v(x) \geq w_v(\bar{x})$ for some $x \in V$.} In this case we let $x$ be {\sc true}, which
by Lemma \ref{lemTrue} implies the following:

\begin{center}
$\begin{array}{rcl}
  \Tt{F} & \geq &
                        \Tt{F'} + (1-\pp) \epsilon(x) \\
  & = & \Tt{F'} + (1-\pp) ( w_v(x) - \pp w_v(\bar{x})) \\
  & \geq & \Tt{F'} + (1-\pp) ( w_v(x) - \pp w_v(x)) \\
  & = & \Tt{F'} + (1-\pp)^2 w_v(x) \\
  & = & \Tt{F'} +2 \gamma w_v(x). \\
\end{array}$
\end{center}

  If $y \in V \setminus V'$, then either $y=x$ or $(\bar{x} \vee \bar{y}) \in C$. Therefore
$|V|-|V'| \leq w_v(\bar{x})+1 \leq w_v(x)+1$. As $w_v(x) \geq 1$ we note that $2 \gamma w_v(x) \geq
\gamma (w_v(x) +1)$. This implies the following, by induction, which completes the proof of Case A.

\begin{center}
$\begin{array}{rcl}
  \Tt{F} & \geq & \Tt{F'} + \gamma (w_v(x) +1) \\
 & \geq & \gamma |V'| + \gamma (w_v(x) +1)
 \geq  \gamma |V|. \\
\end{array}$
\end{center}

{\bf Case B:} {\em Case A is false, and $(1-\pp)\epsilon(x) \geq \gamma$ for some $x \in V$. }

 Again we let $x$ be {\sc true}. Since
$w_v(y) < w_v(\bar{y})$ for all $y \in V$, we have $|V|=|V'|+1$. Analogously
to Case A, using Lemma \ref{lemTrue}, we get the following:

\begin{center}
$\begin{array}{rcl}
  \Tt{F} & \geq & \Tt{F'} + (1-\pp) \epsilon(x) \\
 & \geq & \gamma |V'| + \gamma
  =   \gamma |V|. \\
\end{array}$
\end{center}

For Cases C and D, we generate a graph $G$ from the set of clauses.
The vertex set of $G$ is the variables in $V$ (i.e. $V(G)=V$) and there is an edge between $x$ and $y$
if and only if the clause $(\bar{x} \vee \bar{y})$ exists in $C$. A {\em good} edge in $G$ is an edge
$uv \in E(G)$ such that no vertex $z \in V$ has $N(z)=\{u,v\}$ (that is, an edge is good if and only if
the corresponding clause is good).

{\bf Case C:} {\em Cases A and B are false, and there exists a good clause $(\bar{x} \vee \bar{y})$.} Without loss of generality assume
that $\epsilon(x)
\geq \epsilon(y)$. We will first let $y$ be {\sc False} and then we will let $x$ be {\sc true}.  By letting $y$
be {\sc False} we get the following by Lemma \ref{lemFalse}, where $F^{**}$ is defined in Lemma \ref{lemFalse}:
$
  \Tt{F}\ge  \Tt{F^{**}} - \pp \epsilon(x).
$

Note that the clause $(\bar{x} \vee \bar{y})$ has been removed so $w^{**}_v(\bar{x}) = w_v(\bar{x}) -
w(\bar{x} \vee \bar{y})$ and $w^{**}_v(x) = w_v(x)$ (where $w^{**}_v(.)$ denote the weights in $F^{**}$).  Therefore using Lemma
\ref{lemTrue} on $F^{**}$ instead of $F$
we get the following, where the formula $F^*$ in Lemma \ref{lemTrue} is denoted by $F'$ below and $w^0=w(\bar{x} \vee \bar{y})$:

\begin{center}
$\begin{array}{rcl}
  \Tt{F^{**}} & \geq &
  \Tt{F'} + (1-\pp) ( w_v(x) - \pp(w_v(\bar{x}) -  w^0) ). \\
\end{array}$
\end{center}

First we show that $|V'|=|V^{**}|-1 = |V|-2$. Assume that $z \in V \setminus (V' \cup \{x,y\})$
and note that $N(z) \subseteq \{x,y\}$. Clearly $|N(z)|=1$ as $xy$ is a good edge.
If $N(z)=\{y\}$ then $(z) \in C'$, so we must have $N(z) = \{x\}$. However the only
way $z \not\in V'$ is if $w_v(z) = w_v(\bar{z})$, a contradiction as Case A is false. Therefore,
$|V'|= |V|-2$, and the following holds by the induction hypothesis.

\begin{center}
$\begin{array}{rcl}
  \Tt{F} & \ge & \Tt{F^{**}}
                             - \pp \epsilon(x) \\
  & \geq &  \Tt{F'} + (1-\pp) ( w_v(x) - \pp(w_v(\bar{x}) -  w^0) ) - \pp \epsilon(x) \\
  & \geq & \gamma |V'| +  (1-\pp) (\epsilon(x) + \pp w^0) - \pp \epsilon(x) \\
 & =  & \gamma |V| - 2 \gamma  +  (1-\pp)\pp w^0 - (2\pp-1)\epsilon(x). \\
\end{array}$
\end{center}

 We would be done if we can show that $2 \gamma  \leq  (1-\pp)\pp w^0 - (2\pp-1)\epsilon(x) $.
As $w^0 \geq 1$ and we know that, since Case B does not hold, $(1-\pp)\epsilon(x) < \gamma$, we would be done if
we can show that $2 \gamma  \leq  (1-\pp)\pp - (2\pp-1)\gamma / (1-\pp) $.
This is equivalent to $\gamma=(1-\pp)^2/2 \le \pp (1-\pp)^2$, which is true, completing the
proof of Case C.

\2

{\bf Case D:} {\em Cases A, B and C are false.} Then $G$ has no good edge.

Assume $xy$ is some edge in $G$ and $z \in V$
such that $N(z)=\{x,y\}$. As $xz$ is not a good edge there exists a $v \in V$, such that $N(v)=\{x,z\}$.
However $v$ is adjacent to $z$ and, thus, $v \in N(z) = \{x,y\}$, which implies that $v = y$.
This shows that $N(y)=\{x,z\}$. Analogously we can show that $N(x)=\{y,z\}$.   Therefore, the only clauses in
$C$ that contain a variable from $\{x,y,z\}$ form the following set:
$S=\{ (x), (y), (z), (\bar{x} \vee \bar{y}), (\bar{x} \vee \bar{z}), (\bar{y} \vee \bar{z})\}. $

Let $F'$ be the formula obtained by deleting the variables $x$, $y$ and $z$ and all clauses containing them.
Now consider the three assignments of truth values to $x,y,z$ such that only one of the three variables is assigned {\sc False}.
Observe that the total weight of clauses satisfied by these three assignments equals $$w_v(\bar{x})+w_v(\bar{y})+w_v(\bar{z})+2(w(x)+w(y)+w(z))=2W,$$
where $W$ is the total weight of the clauses in $S$. Thus, one of the three assignments satisfies the weight of at least $2W/3$ among the
clauses in $S$.
Observe also that $w(C)-w(C') \ge 6$, and, thus, the following holds.

\begin{center}
$\begin{array}{rcl}
  \Tt{F} & \geq & 2(w(C)-w(C'))/3 + \maxw(F') - \pp (w(C') +w(C) - w(C')) \\
 & \geq & \gamma |V'| + 2(w(C)-w(C'))/3 - \pp(w(C) - w(C')) \\
 & = & \gamma |V| - 3\gamma +(2-3\pp)(w(C) - w(C'))/3 \\
 & \geq & \gamma |V| - 3\gamma +2(2-3\pp)
  >  \gamma |V|. \\
\end{array}$
\end{center}

\2

This completes the proof of the correctness of Algorithm A. It remains to show that Algorithm A takes polynomial time.

Each iteration of the algorithm takes $O(nm)$ time. The algorithm stops when $V$ is empty, and at each iteration some variables are
removed from $V$. Therefore, the algorithm goes through at most $n$ iterations and, in total, it takes $O(n^2m)$ time.
This completes the proof of Lemma \ref{compact}.
\qed

\2

Note that the bound $(2-3\pp)/2$ in Lemma \ref{compact} cannot be improved due to the following
example. Let $l$ be any positive integer and let $F=(V,C)$ be defined such that
$V=\{x_1,x_2,\ldots,x_l,y_1,y_2,\ldots,y_l\}$ and $C$ contain the constraints
$(x_1)$, $(x_2)$,\ldots , $(x_l)$, $(y_1)$, $(y_2)$,\ldots , $(y_l)$ and $(\bar{x}_1 \vee \bar{y}_1)$,
$(\bar{x}_2 \vee \bar{y}_2)$,\ldots , $(\bar{x}_l \vee \bar{y}_l)$.
  Let the weight of every constraint be one and note that for every $i$ we can only satisfy
two of the three constraints $(x_i)$, $(y_i)$ and $(\bar{x}_i \vee \bar{y}_i)$.
Therefore $\maxw(F) = 2l$ and the following holds:

\begin{center}
$\begin{array}{c}
\pp  w(C) + \frac{|V|(2-3\pp)}{2} = 3l\pp + \frac{2l(2-3\pp)}{2} = l (3\pp + 2-3\pp) = 2l = \maxw(F). \\
\end{array}$
\end{center}

\section{Parameterized Complexity Results}\label{sec:fpt}

Recall that formulations of parameterized problems {\sc SAT-A($m/2$)} and {\sc SAT-A($\pp m$)} were given in Section \ref{sec:Intro}.

\begin{theorem} \label{reduce}
The problem {\sc SAT-A($\pp m$)} has a proper kernel with at most $\lfloor (7+3\sqrt{5})k\rfloor$ variables.
\end{theorem}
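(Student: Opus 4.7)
The plan is to derive the kernel almost directly from Theorem~\ref{main1}: I will use the fact that the subformula $F'$ produced there (the matching autarky part) satisfies $w(C') \ge |V'|$, substitute this into the Lieberherr--Specker-style bound, and reduce to a pure variable count.

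First I would apply Theorem~\ref{main1} to the input formula $F = (V,C)$ to obtain, in polynomial time, a subformula $F' = (V',C')$ with
\[
{\rm sat}(F) \ge \pp w(C) + (1-\pp)\,w(C') + (2-3\pp)\,|V \setminus V'|/2.
\]
Since $F'$ comes from the matching autarky used in the proof of Theorem~\ref{match}, the bipartite graph $B_{F'}$ has a matching covering $V'$, so by Lemma~\ref{lem:HT} the formula $F'$ is expanding, and taking $X = V'$ in the expanding condition yields $w(C') \ge |V'|$. Using this together with the algebraic identity $(1-\pp) - (2-3\pp)/2 = \pp/2$, the bound becomes
\[
{\rm sat}(F) \ge \pp m + \tfrac{\pp}{2}\,|V'| + \tfrac{2-3\pp}{2}\,|V| \ge \pp m + \tfrac{|V|}{7+3\sqrt{5}},
\]
where the last step uses $(2-3\pp)/2 = (7-3\sqrt{5})/4 = 1/(7+3\sqrt{5})$, the final equality following from $(7-3\sqrt{5})(7+3\sqrt{5}) = 4$.

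The kernelization is then just a size threshold: if $|V| \ge (7+3\sqrt{5})\,k$, the displayed inequality gives ${\rm sat}(F) \ge \pp m + k$, so output any fixed trivial yes-instance; otherwise return $(F,k)$ unchanged. Because $|V|$ is an integer and $(7+3\sqrt{5})\,k$ is irrational for every $k \ge 1$, the condition $|V| \ge (7+3\sqrt{5})\,k$ is equivalent to $|V| > \lfloor (7+3\sqrt{5})\,k \rfloor$; hence in the remaining case the returned instance has at most $\lfloor (7+3\sqrt{5})\,k \rfloor$ variables and the parameter is preserved, yielding a proper kernel. The only non-routine step is the matching autarky argument that gives $w(C') \ge |V'|$; after that the bound follows from a single algebraic manipulation, so there is no real obstacle.
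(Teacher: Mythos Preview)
Your argument has a genuine gap at its central step: the claim that the subformula $F'$ produced by Theorem~\ref{main1} is expanding (and hence $w(C')\ge |V'|$) is false. In Theorem~\ref{main1} the subformula $F'$ is exactly the autarky part $F_U$ from Theorem~\ref{match}; it is the \emph{complement} $F\setminus F'$ that is expanding, not $F'$ itself. In a matching autarky the matching saturates the \emph{clauses} of $F_U$ (each clause is matched to a distinct variable in $U$), not the variables; indeed $U$ is chosen precisely because it has a deficiency, so $B_{F'}$ need not have any matching covering~$V'$. A one-line counterexample: let $F$ consist of a single unit-weight clause $(x_1\vee\cdots\vee x_n)$. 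The autarky is on all variables, so $F'=F$, $|V'|=n$, $w(C')=1$, and $w(C')\ge |V'|$ fails for every $n\ge 2$. The same example shows that your derived inequality ${\rm sat}(F)\ge \pp m + |V|/(7+3\sqrt{5})$ is simply false: here ${\rm sat}(F)=1$, while the right-hand side is $\pp + n/(7+3\sqrt{5})$, which exceeds $1$ once $n$ is large.

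The paper's proof avoids this by not trying to bound $|V|$ at all. It passes to the expanding subformula $F\setminus F_U$ with an adjusted parameter $k'\le k$, so that the two instances are equivalent, and then applies Theorem~\ref{main} directly to $F\setminus F_U$ to bound \emph{its} variable set by $(7+3\sqrt{5})k$; the autarky part contributes only via the parameter shift. There is also a second, smaller gap in your write-up: even if your variable bound held, returning $(F,k)$ unchanged would not give a kernel, because the total weight $w(C)$ (equivalently, the multiset size $m$) is not bounded by any function of~$k$ when only $|V|$ is bounded. The paper handles this explicitly by observing that when $w(F')$ is large relative to $2^{|V'|}$ one can brute-force the instance in time polynomial in $w(F')$.
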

\begin{proof}
Consider an instance $(F=(V,C),k)$ of the problem. By Theorem \ref{match},
there is an autarky $\beta : U \rightarrow \{ \textsc{true, false} \}$ which can be found by the polynomial algorithm of Lemma \ref{lem:aut} such that
${\rm sat}(F)={\rm sat}(F\setminus F_U)+w(F_U)$ and $F\setminus F_U$ is an expanding formula.

If $U=V$, then ${\rm sat}(F)=w(F)$, and the kernel is trivial.

Now suppose that $U\neq V$ and denote $F\setminus F_U$ by $F'=(V',C').$  We want to choose an integral
parameter $k'$ such that $(F, k)$ is a {\sc yes}-instance of
the problem if and only if $(F', k')$ is a {\sc yes}-instance of the problem.
It is enough for $k'$ to satisfy  $\maxw(F) - \lfloor \pp w(F)\rfloor -k = \maxw(F') - \lfloor \pp w(F')\rfloor - k'$.
By Theorem \ref{match}, $\maxw(F') = \maxw(F) - w(F)+w(F')$.
Therefore, we can set $k'= k - w(F) + w(F')+ \lfloor \pp w(F)\rfloor - \lfloor \pp w(F')\rfloor$. Since $w(F)-w(F')\ge \lceil
\pp(w(F)-w(F'))\rceil\ge \lfloor \pp w(F)\rfloor - \lfloor \pp w(F')\rfloor$, we have $k'\le k.$

By Theorem \ref{main}, if $k' \le \frac{|V'|(2-3\pp)}{2}$, then $F$ is a {\sc yes}-instance of the problem.
Otherwise, $|V'| < \frac{2k}{2-3\pp}=(7+3\sqrt{5})k$. Note that $F'$ is not necessarily a kernel as $w(F')$ is not necessarily bounded by a
function of $k$. However, if $w(F')\ge 2^{2k/(2-3\pp)}$ then we can solve the instance $(F',k')$ in time $O(w(F')^2)$ and, thus, we may assume
that $w(F')< 2^{2k/(2-3\pp)}$, in which case, $F'$ is the required kernel.
\end{proof}

\2

\begin{theorem}
The problem {\sc SAT-A($m/2$)} has a proper kernel with at most $4k$ variables and $(2\sqrt{5}+4)k\leq 8.473k$ clauses.
\end{theorem}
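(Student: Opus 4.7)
The plan is to apply matching autarky plus two lower bounds on $\maxw$. First, I would reduce $(F,k)$ to an equivalent UCF instance $(F_0,k)$ by iteratively deleting pairs of conflicting unit clauses; each such deletion drops $m$ by $2$ and $\maxw(F)$ by exactly $1$, so $\maxw(F) - m/2$ and hence $k$ is preserved. Then I would apply Theorem \ref{match} to $F_0$ to obtain the matching autarky on a set $U$ with satisfied subformula $F_U$ and expanding UCF remainder $F' := F_0 \setminus F_U$. Since $\maxw(F_0) = w(F_U) + \maxw(F')$, the original instance is equivalent to $(F', k')$ with $k' := k - w(F_U)/2 \le k$; if $k' \le 0$, the trivial bound $\maxw(F') \ge w(F')/2$ already forces $\maxw(F) \ge m/2 + k$ and we output YES. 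Otherwise, $F'$ will be the kernel.

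Next, I would use two lower bounds on $\maxw(F')$, writing $m' := w(F')$ and $n' := |V(F')|$. The first is the Lieberherr-Specker bound $\maxw(F') \ge \pp m'$, a corollary of Theorem \ref{main}. The second, new bound is $\maxw(F') \ge \tfrac{1}{2} m' + \tfrac{1}{4} n'$. To prove it, apply the transformation from the proof of Theorem \ref{main} to $F'$, obtaining a compact $F^* = (V^*, C^*)$ with $|V^*| = n'$, $w(C^*) = m'$, and $\maxw(F') \ge \maxw(F^*)$. Because $F^*$ is compact, every variable $x \in V^*$ has a unit clause $(x) \in C^*$ of weight at least $1$, so the all-true assignment satisfies all unit clauses and gives $\maxw(F^*) \ge n'$. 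Combined with Lemma \ref{compact}, $\maxw(F^*) \ge \max\{\,n',\, \pp m' + \gamma n'\,\}$. A direct calculation shows this maximum is always at least $\tfrac{1}{2} m' + \tfrac{1}{4} n'$: the two arguments of the max coincide at $m' = \tfrac{3}{2} n'$ with common value $n'$, and $n' \ge \tfrac{1}{2}m' + \tfrac{1}{4}n'$ precisely when $m' \le \tfrac{3}{2}n'$, while $\pp m' + \gamma n' \ge \tfrac{1}{2}m' + \tfrac{1}{4}n'$ precisely when $m' \ge \tfrac{3}{2}n'$ (using $(\tfrac{1}{4}-\gamma)/(\pp-\tfrac{1}{2}) = 3/2$).

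Finally, for a NO-instance we have $\maxw(F') < \tfrac{1}{2} m' + k'$. The Lieberherr-Specker bound then gives $(\pp - \tfrac{1}{2}) m' < k'$, whence $m' < k'/(\pp-\tfrac{1}{2}) = (2\sqrt{5} + 4)k' \le (2\sqrt{5} + 4)k$; the new bound gives $\tfrac{1}{4} n' < k'$, whence $n' < 4k' \le 4k$. The hard part will be establishing the new inequality $\maxw(F') \ge \tfrac{1}{2} m' + \tfrac{1}{4} n'$; the key insight is to pair Lemma \ref{compact} with the orthogonal all-true bound $\maxw(F^*) \ge n'$ for compact formulas, which exactly compensates for Lemma \ref{compact} being loose in the regime $m' \le \tfrac{3}{2} n'$.
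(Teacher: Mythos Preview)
Your proof is correct and rests on the same two lower bounds as the paper's: $\maxw(F')\ge\pp m'+\gamma n'$ from Theorem~\ref{main} and $\maxw(F')\ge n'$. The paper obtains the latter directly from the expanding property (a matching covering $V'$ lets one satisfy $n'$ clauses), and then, assuming a \textsc{no} instance, combines the two resulting linear inequalities $(\pp-\tfrac12)m'<k'-\gamma n'$ and $2(n'-k')<m'$ algebraically to extract $n'<4k'$. You instead take the extra step of fusing the two lower bounds into the single inequality $\maxw(F')\ge\tfrac12 m'+\tfrac14 n'$ before invoking the \textsc{no} condition, which makes $n'<4k'$ immediate; the underlying algebra is identical, but your formulation isolates a clean standalone bound. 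Your route to $\maxw(F')\ge n'$ via the all-\textsc{true} assignment on the compact reduction is also valid, though the paper's matching argument on the expanding formula is more direct. One small technical point: your $k'=k-w(F_U)/2$ need not be an integer, so for a \emph{proper} kernel you should handle parity as the paper does, taking $k'=k-\lceil w(F_0)/2\rceil+\lceil w(F')/2\rceil$; this does not affect the bounds.
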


\begin{proof}
First, we reduce the instance to a UCF instance $F=(V,C)$. As in Theorem \ref{reduce}, in polynomial time, we can obtain an expanding formula $F'=(V',C')$.
Again, we want to choose a parameter $k'$ such that $(F, k)$ is a {\sc yes}-instance if
and only if $(F', k')$ is a {\sc yes}-instance.

It is enough for $k'$ to satisfy
$\maxw(F) - \lfloor w(F)/2\rfloor - k  = \maxw(F') -  \lfloor w(F')/2\rfloor - k'$.
By Theorem \ref{match}, $\maxw(F') = \maxw(F) - w(F)+w(F')$.
Therefore, we can set $k'= k - \lceil w(F)/2 \rceil + \lceil w(F')/2 \rceil$. As $w(F') \le w(F)$, we have
$k'\le k.$

 By Theorem \ref{main}, there is a truth assignment for $F'$ with weight at least $\pp w(F') + \frac{|V'|(2-3\pp)}{2}$.
 Hence, if $k' \le (\pp-1/2) w(F') + \frac{|V'|(2-3\pp)}{2}$, the instance is a {\sc yes}-instance.
Otherwise,
\begin{equation}\label{eq:1}
k'-\frac{|V'|(2-3\pp)}{2}>(\pp-\frac{1}{2})w(F').
\end{equation}

The weaker bound $k'>(\pp-\frac{1}{2})w(F')$ is enough to give us the claimed bound on the total weight (i.e., the number) of clauses.
To bound the number of variables, note that since $F'$ is expanding, we can satisfy at
least $|V'|$ clauses. Thus, if $w(F')/2+k'\le |V'|$, the instance is a {\sc yes}-instance.
Otherwise, $w(F')/2+k' > |V'|$ and
\begin{equation}\label{eq:2}2(\pp-\frac{1}{2})(|V'|-k') < (\pp-\frac{1}{2})w(F').\end{equation}

Combining  Inequalities (\ref{eq:1}) and (\ref{eq:2}), we obtain:
$$2(\pp-\frac{1}{2})(|V'|-k') < (\pp-\frac{1}{2})w(F') < k' - \frac{|V'|(2-3\pp)}{2}.$$

This simplifies to $|V'|<4k'\le 4k$, giving the required kernel.
\end{proof}

\section{Discussion}

A CNF formula $I$ is $t$-{\em satisfiable} if any subset of $t$ clauses of $I$ can be satisfied simultaneously. In particular, a CNF formula is unit-conflict free if and only if it is 2-satisfiable.
Let $r_t$ be the largest real such that in any $t$-satisfiable CNF formula at least $r_t$-th fraction of its clauses can be satisfied simultaneously. Note that $r_1=1/2$ and $r_2=(\sqrt{5}-1)/2.$
Lieberherr and Specker \cite{LieberherrSpecker82} and, later, Yannakakis  \cite{Yannakakis94} proved that $r_3\ge 2/3$.  K{\"a}ppeli and Scheder \cite{KapScheder} proved that $r_3\le 2/3$ and, thus, $r_3 = 2/3$. Kr{\'a}l \cite{Kral} established the value of $r_4$: $r_4=3/(5+[(3\sqrt{69}-11)/2]^{1/3}-[3\sqrt{69}+11)/2]^{1/3})\approx 0.6992$.

For general $t$, Huang and Lieberherr \cite{HL85} showed that $\lim_{t\rightarrow \infty} r_t\le 3/4$ and Trevisan \cite{T1997} proved that $\lim_{t\rightarrow \infty} r_t=3/4$ (a different proof of this result is later given by Kr{\'a}l \cite{Kral}).

In the preliminary version of this paper published in the proceedings of IPEC 2010 we asked to establish parameterized complexity of the following parameterized problem: given a 3-satisfiable CNF formula $F=(V,C)$, decide whether $\maxw(F) \geq 2|C|/3+k$, where $k$ is the parameter.
This question was recently solved in \cite{GutJonYeo} by showing that the problem has a kernel with a linear number of variables. Unlike this paper, \cite{GutJonYeo} uses the Probabilistic Method. Similar question for any fixed $t>3$ remains open.

\end{document}